\newcommand{\beq}{\begin{equation}}
\newcommand{\eeq}{\end{equation}}
\numberwithin{equation}{section} 
\newtheorem{thm}[equation]{Theorem}
\theoremstyle{remark}
\theoremstyle{definition}
\newtheorem{defn}[equation]{Definition}
\title{Lattice Multiverse Models}
\author{S. Gill Williamson}
\thanks{Department of Computer Science and Engineering, 
University of California San Diego; \url{http://cse.ucsd.edu/~gill}.
{\bf Keywords:} lattice graphs, multiverse models, provability, ZFC independence
}
\date{}                                           
\begin{document}
\maketitle
\begin{abstract}  
Will the cosmological multiverse, when described mathematically, have easily stated properties that are impossible to prove or disprove using mathematical physics? We explore this question by constructing lattice multiverses which exhibit such behavior even though they are much simpler mathematically than any likely cosmological multiverse.

\end{abstract}

\section{Introduction}

We first describe our lattice multiverse models (precise definitions follow). Start with a fixed directed graph $G=(N^k,\Theta)$ (vertex set $N^k$, edge set $\Theta$) where $N$ 
is the set of nonnegative integers and $k\geq 2$. The vertex set $N^k$ of $G$ is the nonnegative $k$ dimensional integral lattice.  If every $(x,y)$ of $\Theta$ satisfies $\max(x) >\max(y)$ where $\max(z)$ is the maximum coordinate value of $z$ then we call  $G$ a {\em downward directed lattice graph}. 
The infinite lattice graph $G$ defines the  set
$\{G_D\mid D\subset N^k\,,\,D\,\,{\rm finite}\}$ of finite vertex induced subgraphs of $G$.  

With each downward directed lattice graph $G$ we associate, in various ways, sets of functions 
$P_G=\{f \mid f: D\rightarrow N, D\subset N^k,\,D\,\,{\rm finite}\}$ (the finite set $D$ is the domain of $f$, and $N$ is the range of $f$).  
Infinite sets of the form ${\bf M}=\{(G_D,f)\mid f\in P_G\,,\,  {\rm domain}(f) = D \}$, will be called lattice ``multiverses"  of $G$ and $P_G$;  the sets $(G_D, f)$ will be the ``universes''  of $\bf M$.

Our use of the terms ``multiverse" and ``universe" in this combinatorial lattice context is inspired by the  analogous but much more complex structures of the same name in cosmology.  The lattice multiverse is a geometric structure for defining the possible lattice universes, $(G_D, f)$, where $G_D$ represents the geometry of the lattice universe and $f$ the things that can be computed about that universe (roughly analogous to the physics of a universe). An example and discussion is given below, see Figure~\ref{fig:univp2}. 

In this paper, we state some basic properties of our elementary lattice multiverses that {\em provably} cannot be proved true or false using the mathematical techniques of physics.  Could the much more complex cosmological multiverses also give rise to conjectured properties provably out of the range of mathematical physics?   Our results suggest  that such a possibility must be considered. 

For the provability results, we rely on the  important work of Harvey Friedman concerning finite functions and large cardinals \cite{hf:nlc} and applications of large cardinals to graph theory \cite{hf:alc}. 

\vskip .25in  
\begin{defn} [\bf Vertex induced subgraph $G_D$]
For any {\em finite} subset $D\subset N^k$ of vertices of $G$, let 
$G_D = (D, \Theta_D)$ be the subgraph of $G$ with vertex set $D$ and edge set 
$\Theta_D =\{(x,y)\mid (x,y)\in \Theta,\, x, y \in D\}$. We call $G_D$ the {\em subgraph of $G$ induced
 by the  vertex set $D$}. 
\label{def:vertexinduced}
\end{defn}

\vskip .25in  
\begin{defn} [\bf Path and terminal path in $G_D$]
A sequence of distinct vertices of $G_D$, $(x_1, x_2, \ldots, x_t)$, is a {\em path}
in $G_D$ if  $t=1$ or if $t>1$ and $(x_i, x_{i+1}) \in \Theta_D,\, i=1, \ldots\,, t-1$.   This path is {\em terminal}
if there is no path of the form  $(x_1, x_2, \ldots, x_t, x_{t+1})$. 
\label{def:path}
\end{defn}

We refer to sets of the form 
$E^k\equiv\times^k E \subset N^k$, $E\subset N$, as {\em {k}-cubes} 
or simply as {\em cubes}.
If $x\in N^k$, then $\min(x)$ is the minimum coordinate value of $x$ and 
$\max(x)$ is  the maximum coordinate value (see discussion of Figure~\ref{fig:univp2}).

\vskip .25in
\begin{defn} [\bf Terminal label function for $G_D$]
Consider a downward directed graph $G=(N^k,\Theta)$ where $N$ 
is the set of nonnegative integers and $k\geq 2$.
For any finite $D\subset N^k$, let $G_D = (D, \Theta_D)$ be the induced subgraph of $G$. 
Define a function $t_D$ on $D$ by
$$t_D(z) = \min(\{\min(x)\mid x\in T_D(z)\} \cup \{\min(z)\})$$
where $T_D(z)$ is the set of all last vertices of terminal paths $(x_1, x_2, \ldots, x_t)$
where $z=x_1$.
We call $t_D$ the {\em terminal label function for $G_D$}.
\label{def:termlabel}
\end{defn}

In words, $t_D(z)$ is gotten by finding all
of the end vertices of terminal paths starting at $z$, taking their minimum coordinate values, throwing in the minimum 
coordinate value of $z$ itself and, finally, taking the minimum of all of these numbers.

\begin{figure}[h]
\begin{center}
\includegraphics[width=3.8in]{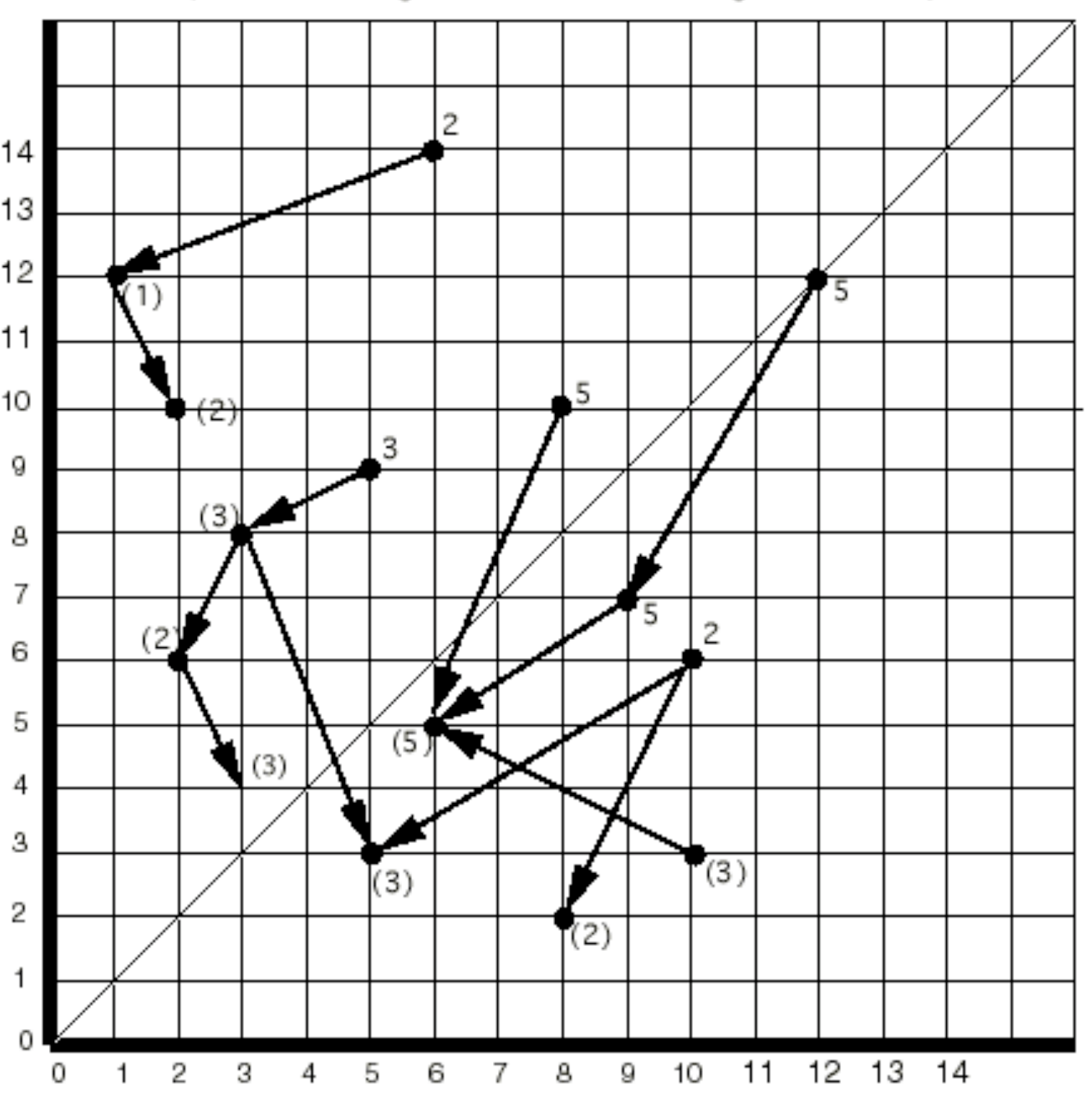}
\caption{Computing $t_D$}\label{fig:univp2}
\end{center}
\end{figure}

Figure~\ref{fig:univp2} shows an example of computing $t_D$ where $D = E \times E$,   
$E = \{0,\ldots, 14\}$.  
The graph $G_D = (D, \Theta_D)$ has $|D|=225$ vertices and $|\Theta_D| = 12$
edges (shown by arrows in Figure~\ref{fig:univp2}). 
Vertices not  on any edge, such as the vertex $(6,10)$, are called {\em isolated} vertices. 
A path  in $G_D$ will be denoted by a sequence of vertices $(x_1, x_2, \ldots, x_t)$, $t\geq 1$. 
For example, $((5,9), (3,8), (2,6))$ is a path:  $x_1=(5,9)$, $x_2 = (3,8)$, $x_3 = (2, 6)$.
Note that the path $((5,9), (3,8), (2,6))$ 
can be extended to $((5,9), (3,8), (2,6), (3,4))$, but this latter path is 
terminal (can't be extended any farther, Definition~\ref{def:path}).  
Note that there is another terminal path shown in Figure~\ref{fig:univp2} 
that starts at $(5,9)$: $((5,9), (3,8), (5,3))$.

As an example of computing $t_D(z)$, look at $z=(5,9)$ in Figure~\ref{fig:univp2} where the value, 
$t_D(z)=3$, of the terminal label function is indicated. 
From Definition~\ref{def:termlabel}, the set $T_D((5,9))=\{(3,4), (5,3)\}$ and $\{\min(x)\mid x\in T_D(z)\} = \{3,3\} = \{3\}$. 
The set $\{\min(z)\} =\{5\}$ and, thus, $\{\min(x)\mid x\in T_D(z)\} \cup \{\min(z)\} = \{3,5\}$ and
$T_D(z)  = \min\{3,5\} = 3$.  If $z$ is isolated,  $t_D(z)=\min(z)$ (for example,
$z=(4,2)$ in the figure is isolated, so $t_D(z)=2$). Such trivial labels are omitted in the figure. 
For $z=(10,3)$,  $T_D(z) = \{(6,5)\}$ so $t_D(z) = \min(z) = 3$. 

\vskip .25in
\begin{defn} [\bf Significant labels]
Let  $t_D$ be the terminal label function for $G_D$ and let $S\subset D$. The set $\{t_D(z)\mid z\in S\,,\, t_D(z) < \min(z)\}$ is the set of 
$t_D$--{\em significant labels} of $S$ in $D$. 
 \label{def:siglabels}
\end{defn}

Referring to Figure~\ref{fig:univp2} with $S=D$, 
$\{(5,9), (6,14), (8,10), (9,7), (10,6), (12,12)\}$ are vertices with significant labels,
and the set of significant labels is
$\{t_D(z)\mid z\in S\,,\, t_D(z) < \min(z)\} = \{2, 3, 5\}.$ The terminology comes from the ``significance" of these number
with respect to order type equivalence classes and the concept of regressive regularity (e.g., Theorem~\ref{thm:P2}). The set of significant labels also occurs in certain studies of lattice embeddings of posets~\cite{jg:pos}.

In the next section, we study the set of significant labels. 

\section{Lattice Multiverse TL}

We start with a definition and related theorem that we state without proof.
\vskip .25in
\begin{defn} [\bf full, reflexive, jump-free] 
Let $Q$ denote a collection of functions whose domains are finite subsets of $N^k$ and ranges are subsets of $N$. 

\begin{enumerate}
\item {\bf full:}  We say that $Q$ is a {\em full} family of functions on $N^k$ if for every finite subset 
$D\subset N^k$ there is at least one function $f$ in $Q$ whose domain is $D$.

\item{\bf reflexive:} We say that  $Q$ is a {\em reflexive} family of functions on $N^k$ if for 
every $f$ in $Q$ and for each $x\in D$, $D$ the domain of $f$, $f(x)$ is a coordinate of some $y$ in $D$.

\item{\bf jump-free:} For $D\subset N^k$ and $x\in D$ define $D_x = \{z\mid z\in D,\, \max(z) < \max(x)\}$. 
Suppose that for all $f_A$ and $f_B$  in $Q$, where $f_A$ has domain $A$ and $f_B$ has domain $B$,  the conditions
 $x\in A\cap B$, $A_x \subset B_x$, and $f_A(y) = f_B(y)$ for all $y\in A_x$ imply that 
$f_A(x) \geq f_B(x)$.  Then $Q$ will be called a {\em jump-free} family of functions on $N^k$.  
\end{enumerate}
\label{def:fullrefjf}
\end{defn}

Figure~\ref{fig:jumpfree} may be helpful in thinking about the jump-free condition ($k=2$). 
The square shown in the figure has sides of length $\max(x)$. 
The set $A_x$ is the intersection of the set $A$ with the set of lattice points
interior to the square.  The set $B_x$ is this same intersection for the set $B$.
\begin{figure}[h]
\begin{center}
\includegraphics[width=3in]{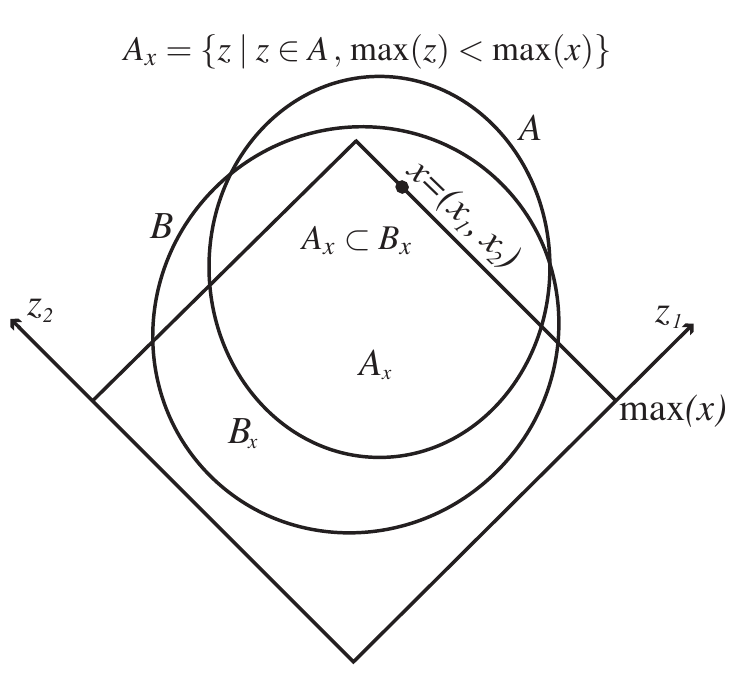}
\caption{Jump-free ``light cone''}\label{fig:jumpfree}
\end{center}
\end{figure}

To prove our main result, we use a theorem of 
\href{http://www.math.ohio-state.edu/~friedman} {Harvey Friedman} called the ``jump-free theorem,'' 
Theorem~\ref{thm:jumpfree}. 
The jump-free theorem is proved and shown to be independent of the ZFC (Zermelo, Fraenkel, Choice) 
 axioms of mathematics in Section 2 of \cite{hf:alc},
``Applications of Large Cardinals to Graph Theory,'' October 23, 1997, No. 11 of  
\href{http://www.math.ohio-state.edu/~friedman//manuscripts.html}{Preprints, Drafts, and Abstracts}.
The proof uses results from \cite{hf:nlc}.

\vskip .25in

\begin{thm}[\bf Friedman's jump-free theorem] 
Let $Q$ denote a full, reflexive, and jump-free family of functions on $N^k$ (Definition~\ref{def:fullrefjf}).
Given any integer $p>0$, there is a finite $D\subset N^k$ and a subset $S=E^k\subset D$ with $|E|=p$ 
such that for some $f \in Q$ with domain $D$, the set
$\{f(z)\mid z\in S, f(z)<\min(z)\}$ has at most cardinality $k^k$.
 \label{thm:jumpfree}
\end{thm}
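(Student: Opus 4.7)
The plan is to combine an order-type partition argument with a Ramsey-theoretic selection principle that forces the regressive values of $f$ on $E^k$ to factor through a bounded collection of parameters. For $z = (z_1,\ldots,z_k) \in E^k$ with $E = \{e_1<\cdots<e_p\}$, define the \emph{order type} $\tau(z)$ to be the map sending each coordinate position $i \in \{1,\ldots,k\}$ to the rank of $z_i$ among the distinct values occurring in $z$. There are at most $k^k$ such order types. If one can arrange an $E$ of size $p$ and an $f \in Q$ on some domain $D \supset E^k$ that is ``order-type indiscernible'' on $E^k$, in the sense that $f(z)$ depends only on $\tau(z)$ whenever $f(z) < \min(z)$, then the set of regressive values of $f$ on $E^k$ has at most $k^k$ elements and the theorem follows immediately.

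To produce such an $(E, f)$, I would build $D$ in stages, at each stage enlarging a candidate cube $E_1^k \subset E_2^k \subset \cdots$ and extending a compatible family of partial functions drawn from $Q$. Fullness guarantees that a function with the chosen finite domain always exists; reflexivity restricts values to coordinates of vertices already present, so each $f(x)$ lies in a controlled finite set; and the jump-free hypothesis acts as a monotonicity constraint: given any two candidates $f_A, f_B \in Q$ with $x \in A\cap B$, $A_x \subset B_x$, and matching values on the common lower cone, one must have $f_A(x) \ge f_B(x)$. This monotonicity allows one to pass to a canonical ``minimal'' extension along the chain, so that the behavior at stage $n+1$ is determined up to bounded ambiguity by the behavior at stage $n$. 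A Ramsey-type selection principle, applied to the coloring $z \mapsto f(z)$ on successive cubes, is then used to thin the ambient $E_n$ down to a final $E$ of size $p$ on which the regressive values collapse along order-type classes, giving the bound $k^k$.

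The central obstacle, and the reason this theorem sits outside ZFC, is that the Ramsey-type selection principle required to force order-type indiscernibility on a cube of prescribed size $p$ cannot be obtained by ordinary partition calculus on $N^k$. Friedman establishes it in \cite{hf:alc} by invoking the existence of subtle cardinals (equivalently, the appropriate level of the stationary Ramsey property hierarchy) to produce a sufficiently long sequence of indiscernibles, after which a compactness/absoluteness argument converts the infinitary indiscernibility into the required finite statement about $E^k \subset D$. My proposed plan therefore reduces the theorem to the existence of such indiscernibles, and the genuinely hard step is precisely this invocation of large cardinal strength, in accord with the ZFC-independence noted before the statement.
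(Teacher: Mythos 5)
The first thing to say is that the paper does not prove this statement at all: the text introducing Definition~\ref{def:fullrefjf} reads ``We start with a definition and related theorem that we state without proof,'' and Theorem~\ref{thm:jumpfree} is imported verbatim from Section~2 of Friedman's \emph{Applications of Large Cardinals to Graph Theory} \cite{hf:alc}, where it is both proved from large cardinal hypotheses and shown to be independent of ZFC. So there is no in-paper argument to compare yours against; the only question is whether your proposal stands on its own as a proof. It does not. Everything in your sketch funnels into a ``Ramsey-type selection principle'' that would force $f$ to be order-type indiscernible (equivalently, regressively regular) on a cube of prescribed size $p$, and you then concede that this principle ``cannot be obtained by ordinary partition calculus on $N^k$'' and must be supplied by subtle cardinals via Friedman. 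But that selection principle \emph{is} the theorem: once you have regressive regularity over $E^k$, the bound $k^k$ on the set of regressive values is immediate from counting order types (this is exactly the content of the Technical Note following the statement). What you have written is therefore a reduction of the theorem to itself, with the genuinely hard content --- the construction of the indiscernible cube from the stationary Ramsey property / subtle cardinal hierarchy, and the transfer of that infinitary statement down to a finite $D$ and $E^k \subset D$ --- left entirely as a citation.

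Two subsidiary points. First, several intermediate steps in your staging argument are asserted without justification and are not obviously true: it is not explained why the jump-free monotonicity yields a ``canonical minimal extension along the chain,'' what ``determined up to bounded ambiguity'' means, or why thinning $E_n$ by an ordinary Ramsey argument would ever terminate with a cube of the required size (if it did, the theorem would be provable in ZFC, contradicting its known independence). Second, a small but telling discrepancy: the paper's Technical Note observes that for $k \geq 2$ the number of order-type classes is \emph{strictly less than} $k^k$, whereas your count stops at ``at most $k^k$''; this is harmless for the stated bound but suggests the order-type bookkeeping has not been carried out carefully. If your intent is to use this theorem as the paper does --- as a black box --- you should simply cite \cite{hf:alc} and \cite{hf:nlc} rather than present a proof sketch; if your intent is to actually prove it, the work remaining is essentially all of Friedman's argument.
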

{\em Technical Note:} The function $f$ of the jump-free theorem can be chosen such that for 
each order 
type~\footnote{Two k-tuples, $x=(x_1,\ldots,x_k)$ and $y=(y_1,\ldots,y_k)$, have the same order type if 
$\{(i,j)\mid x_i < x_j\} =  \{(i,j)\mid y_i < y_j\}$ and  $\{(i,j)\mid x_i = x_j\} =  \{(i,j)\mid y_i =y_j\}.$}
 $\omega$ of $k$-tuples, either $f(x) \geq \min(x)$ for all $x\in E^k$ where $x$ is
of type $\omega$ or $f(x) = f(y) < \min(E)$ for all $x\in E^k$ and $y\in E^k$, $x$ and $y$ of order type $\omega$.
We call  such a function {\em regressively regular} over $E^k$.  Note that for $k\geq 2$, the number of order type 
equivalence classes is always strictly less than $k^k$.

\vskip .25in
\begin{defn} [\bf Multiverse TL -- terminal label multiverse]
Let $G=(N^k,\Theta)$ be a downward directed graph where $N$ is the nonnegative integers. 
Define $\bf M_{TL}$ to be the set  $\{(G_D, t_D) \mid D\subset N^k, \,\,D\,\, {\rm finite}\}$
where  $t_D$ is the terminal label function of  the induced subgraph $G_D$. 
We call $\bf M_{TL}$ a k-dimensional multiverse of type~TL. We refer to the pairs $(G_D, t_D)$ as the 
universes of $\bf M_{TL}$.
\label{def:multversep2}
\end{defn}

We now use Friedman's jump-free theorem to prove a basic structure theorem for Multiverse~TL. Intuitively, this theorem (Theorem~\ref{thm:P2}) states that for any specified cube size, no matter how large, there is a universe of Multiverse~TL that contains a cube of that size with certain special properties.

\vskip .25in
\begin{thm}[\bf Multiverse TL] 
Let $\bf M_{TL}$ be a k-dimensional lattice multiverse of type~TL and let $p$ be any positive integer. Then there
is a universe $(G_D, t_D)$ of $\bf M_{TL}$ and a subset $E\subset N$ with $|E|=p$ and $S=E^k\subset D$ such that 
the set of significant labels $\{t_D(z)\mid z\in S, t_D(z) < \min(z)\}$ has size at most $k^k$. In fact, $t_D$ is
regressively regular over $E^k$.
\label{thm:P2}
\end{thm}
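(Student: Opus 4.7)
The plan is to invoke Friedman's jump-free theorem (Theorem \ref{thm:jumpfree}) on the family $Q = \{t_D : D \subset N^k \text{ finite}\}$ of terminal label functions of the induced subgraphs of our fixed $G$, and then apply the ``regressively regular'' refinement in the technical note following that theorem to obtain the final sentence of the claim.

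First I would verify the three structural conditions of Definition \ref{def:fullrefjf} for $Q$. Fullness is immediate since $t_D$ is defined for every finite $D$ and has domain exactly $D$. Reflexivity holds because, by construction, $t_D(z)$ equals $\min(v)$ for some vertex $v \in T_D(z) \cup \{z\} \subset D$, so $t_D(z)$ is a coordinate of a vertex of the domain $D$.

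The main obstacle is the jump-free condition: given $x \in A \cap B$, $A_x \subset B_x$, and $t_A(y) = t_B(y)$ for all $y \in A_x$, one must show $t_A(x) \geq t_B(x)$. I would argue this by induction on $\max(x)$, exploiting the fact that downward-directedness of $G$ forces every out-neighbor $y$ of $x$ in $G_A$ (resp.\ $G_B$) to have $\max(y) < \max(x)$ and hence to lie in $A_x$ (resp.\ $B_x$). The recursion $T_D(x) = \bigcup_y T_D(y)$ over out-neighbors $y$ of $x$ in $G_D$ lets us express $t_D(x)$ in terms of the $t_D$-values at vertices of $D_x$, so the equality $t_A = t_B$ on $A_x$ should constrain how the terminal structure can rearrange when one enlarges $A$ to $B$ below $\max(x)$. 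The delicate point is that such an enlargement can both create new terminal paths from $x$ (tending to lower $t_B(x)$) and destroy old ones by handing formerly terminal vertices new out-neighbors (tending to raise $t_B(x)$); the hypothesis on $A_x$ must be deployed to rule out any net decrease of $t$ at $x$.

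Once jump-free is established, Theorem \ref{thm:jumpfree} supplies a finite $D \subset N^k$, a cube $S = E^k \subset D$ with $|E| = p$, and a function $f \in Q$ of domain $D$ whose set $\{f(z) : z \in S,\ f(z) < \min(z)\}$ has cardinality at most $k^k$. Because $Q$ contains a unique element per domain, $f$ is forced to equal $t_D$, yielding the universe $(G_D, t_D) \in \mathbf{M_{TL}}$ required by the theorem. The technical note following Theorem \ref{thm:jumpfree} then upgrades the cardinality bound to the full regressive regularity of $t_D$ over $E^k$, completing the argument.
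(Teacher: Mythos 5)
Your plan founders at the jump-free step: the family $Q=\{t_D\}$ is \emph{not} jump-free, so no amount of induction on $\max(x)$ will verify that condition for the raw terminal label functions. Here is a counterexample with $k=2$. Take $x=(10,10)$, $y=(0,8)$, $w=(5,6)$ and let $\Theta$ consist of the edges $(x,y)$ and $(y,w)$ (downward directed, since $10>8>6$). Put $A=\{x,y\}$ and $B=\{x,y,w\}$. Then $x\in A\cap B$ and $A_x=\{y\}\subset\{y,w\}=B_x$. In $G_A$ the vertex $y$ is terminal, so $t_A(y)=\min(y)=0$; in $G_B$ the terminal path from $y$ ends at $w$, so $t_B(y)=\min\{\min(w),\min(y)\}=\min\{5,0\}=0$. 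Hence $t_A(y)=t_B(y)$ on all of $A_x$. But $T_A(x)=\{y\}$ gives $t_A(x)=\min\{0,10\}=0$, while $T_B(x)=\{w\}$ gives $t_B(x)=\min\{5,10\}=5$, so $t_A(x)<t_B(x)$, violating the required inequality $t_A(x)\geq t_B(x)$. The ``delicate point'' you flag --- that enlarging $A$ to $B$ can destroy a formerly terminal path by giving its endpoint a new out-neighbor --- is precisely what happens here, and the hypothesis $t_A=t_B$ on $A_x$ is too weak to prevent it, because $t_D(y)$ does not remember whether $y$ is terminal (the term $\min(y)$ thrown into the definition of $t_D(y)$ can mask the difference).

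The paper's proof avoids this by applying the jump-free theorem not to $t_D$ but to a relaxed function $\hat{t}_D$, defined by $\hat{t}_D(z)=\max(z)$ when $(z)$ is a terminal path in $G_D$ and $\hat{t}_D(z)=t_D(z)$ otherwise. Since downward-directedness forces $\hat{t}_D(z)\leq\max(z)$ with equality exactly at terminal vertices, the value $\hat{t}_D(y)$ encodes terminality; the hypothesis $\hat{t}_A(y)=\hat{t}_B(y)$ on $A_x$ then guarantees that the endpoint of a terminal path from $x$ in $G_A$ is still terminal in $G_B$, which is what makes the jump-free verification go through. (In the counterexample above, $\hat{t}_A(y)=8\neq 0=\hat{t}_B(y)$, so the hypothesis fails and nothing is asserted.) A final transfer argument --- if $t_D(z)<\min(z)$ then $(z)$ is not terminal, so $\hat{t}_D(z)=t_D(z)$ --- brings the cardinality bound and regressive regularity back from $\hat{t}_D$ to $t_D$. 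Your fullness and reflexivity checks are fine, and the overall architecture (verify the three conditions, invoke Theorem~\ref{thm:jumpfree} and its technical note) is right, but without the relaxation the central condition is simply false.
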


\begin{proof}
Recall that $t_D$ is the terminal labeling function of  the induced subgraph 
$G_D = (D, \Theta_D)$  of the graph  $G=(N^k,\Theta)$.
We apply Theorem~\ref{thm:jumpfree} to a ``relaxed'' version, $\hat {t}_D$, of $t_D$ defined by
$\hat {t}_D(z) = \max(z)$ if $(z)$ is a terminal path in $G_D$ and $\hat {t}_D(z) = t_D(z)$  
otherwise.~\footnote{This clever idea is due to Friedman~\cite{hf:alc}.}
If $(z)$ is a terminal path in $G_D$ then $\hat {t}_D(z) = \max(z)$ by definition, and
if $(z)$ is {\em not} a terminal path in $G_D$, the downward condition implies that
$\hat {t}_D(z) = t_D(z) < \max(z)$. 
Thus,   $\hat {t}_D(z) \leq \max(z)$ with equality if and only if $(z)$ is terminal.

Let $Q$ denote the collection of functions $\hat {t}_D$ as $D$ ranges over all finite subsets of $N^k$.  
We will show that $Q$ is full, reflexive, and jump-free (Definition~\ref{def:fullrefjf}).
Full and reflexive are obvious from the definition of $\hat {t}_D$.
We want to show that for all $\hat {t}_A$ and $\hat {t}_B$  in $Q$ the conditions
 $x\in A\cap B$, $A_x \subset B_x$, and $\hat {t}_A(y) = \hat {t}_B(y)$ for all $y\in A_x$ imply that 
$\hat {t}_A(x) \geq \hat {t}_B(x)$.

Suppose  that $(x)$ is terminal in $G_A$.  Then $\hat{t}_A(x) = \max(x)\geq \hat{t}_B(x)$ from our
observations above.

Suppose that $(x)$ is not terminal in $G_A$.  Then $\hat{t}_A(x) = t_A(x)$ by definition of $\hat {t}_A$.  
From the definition of $t_A(x)$, there is a path, $x=x_1, x_2, \ldots, x_t$, with $t> 1$ such that
$t_A(x)=\min(x_t)$ and $(x_t)$ is terminal in $G_A$.  Thus,  $\hat {t}_A(x_t) = \max(x_t)$.  Our basic assumption
is that  $\hat {t}_A(y) = \hat{t}_B(y)$ for all $y\in A_x$ and hence for $y=x_t$. 
Thus, $\hat{t}_B(x_t)=\max(x_t)$ and hence, from our discussion above, $(x_t)$ is also terminal in $G_B$.
Since $A_x \subset B_x$, the path $x=x_1, x_2, \ldots, x_t$ with $t> 1$ is also a terminal path in $G_B$.
Thus, $x_t \in T_B(x)$ (Definition~\ref{def:termlabel}) and $t_B(x)\leq \min(x_t) = t_A(x)$.
Since $(x)$ is not terminal in either $G_A$ or $G_B$, $\hat {t}_B(x) =t_B(x)\leq \min(x_t) = t_A(x) = \hat {t}_A(x)$ 
which completes the proof that $Q$ is jump-free.

From Theorem~\ref{thm:jumpfree}, given any integer $p>0$, there is a finite $D\subset N^k$ and a 
subset $E^k\subset D$ with $|E|=p$ such that, for some $\hat{t}_D \in Q$, the set
$\{\hat{t}_D(z)\mid z\in S,\, \hat{t}_D(z)<\min(z)\}$ has at most cardinality $k^k$.
In fact, $\hat{t}_D(z)$ is regressively regular on $E^k$ (See Theorem~\ref{thm:jumpfree}, technical note).

Finally, we note that if $t_D(z)<\min(z)$ then $(z)$ is not a terminal path in $G_D$ and hence 
$\hat{t}_D(z) = t_D(z) < \min(z)$.  
Thus, $\{t_D(z)\mid z\in S,\, t_D(z)<\min(z)\}$ has at most cardinality $k^k$ also.  
In fact, $\hat{t}_D(z)$ is regressively regular on $E^k$ implies that $t_D(z)$ is regressively regular on $E^k$.

To see this latter point, suppose that  $\hat{t}_D(z) \geq \min(z)$ for all $z\in E^k$ of order type $\omega$.
If $(z)$ is terminal, $t_D(z) = \min(z)$. If $(z)$ is not terminal, $t_D(z) = \hat t_D(z) \geq \min(z)$. Thus,
$\hat{t}_D(z) \geq \min(z)$ for all $z\in E^k$ of order type $\omega$ implies 
$t_D(z) \geq \min(z)$ for all $z\in E^k$ of order type $\omega$.  

Now suppose that
for all $z,w \in E^k$ of order type $\omega$,  $\hat{t}_D(z) = \hat{t}_D(w) < \min(E)$.
This inequality implies that  $\hat{t}_D(z) < \min(z)$ and $\hat{t}_D(w) < \min(w)$ and thus $(z)$ and $(w)$
 are not terminal.  
 Hence, $t_D(z) = \hat{t}_D(z) = \hat{t}_D(w) = t_D(w) < \min(E)$.  
\end{proof}

{\bf Summary:} We have proved that given an arbitrarily large cube, there is some universe
$(G_D, t_D)$ of $\bf M_{TL}$ for which the ``physics," $t_D$, has a simple structure over a cube of that size.
To prove this large-cube property, we have  used a theorem independent of ZFC.  We do not know if this
large-cube property can be proved in ZFC. The mathematical techniques of physics lie within the 
ZFC axiomatic system.

\section{Lattice Multiverse SL}

We now consider a class of multiverses where the ``physics" is more complicated than in Multiverse TL. 
For us, this means that the label function  is more complicated than $t_D$. 
Again, we consider a fixed downward directed graph $G=(N^k,\Theta)$ where $N$ 
is the set of nonnegative integers.

\vskip .25in
\begin{defn} [{\bf Partial selection}]
A function $F$ with domain a subset of $X$ and range a subset of $Y$ will be called a {\em partial function}
from $X$ to $Y$ (denoted by $F: X\rightarrow Y$).  If $z\in X$ but $z$ is not in the domain of $F$, we say 
$F$ is {\em not defined} at $z$.
A  partial function 
$F: (N^k \times N)^r \rightarrow N$
will be called a {\em partial selection} function if whenever 
$F((y_1,n_1), (y_2,n_2), \ldots (y_r,n_r))$ is defined we have 
$F((y_1,n_1), (y_2,n_2), \ldots (y_r,n_r)) = n_i$ for some $1\leq i \leq r$.
\label{def:partselect}
\end{defn}
 
For $x\in N^k$, let $G^x = \{y \mid (x,y)\in \Theta\}$.  $G^x$ is the set of
vertices {\em adjacent} to $x$ in $G$.  

For $x\in N^k$ a vertex of $G$ and $r\geq 1$, let $F^x_r : (G^x \times N)^r \rightarrow N$ denote a partial function.
Let ${\mathcal{F}}_G = \{F^x_r\mid x\in N^k\,,\, r\geq 1\}$ be the set of partial functions for $G$. 

Let $G^x_D$ denote the set of vertices of $G_D$ that are {\em adjacent} to $x$ in $G_D$.

\vskip .25in
\begin{defn} [{\bf Selection labeling function $s_D$ for $G_D$}]
For $D\subset N^k$, $x\in D$, we define $s_D(x)$ by induction on $\max(x)$. For each $x\in N^k$, let
$$
	\Phi^D_x = \{ F_r^x((y_1,n_1), (y_2,n_2), \ldots (y_r,n_r))\mid y_1, \ldots,  y_r \in G^x_D,\, r\geq 1\,,\,F_r^x\in \mathcal{F}_G\} 
$$
be the set of defined values of $F_r^x$ where $s_D(y_1)=n_1$, \ldots, $s_D(y_r) = n_r$.
If $\Phi^D_x\neq \emptyset$, let $s_D(x)$ be the minimum over $\Phi^D_x$; otherwise, let $s_D(x) = \min(x)$.~\footnote
{
Our students called this the ``committee labeling function.''  The graph $G_D$ describes the structure of an organization.  The function $\Phi^D_x = F_r^x((y_1,n_1), (y_2,n_2), \ldots (y_r,n_r))=n_i$ represents the 
committee members $y_j$ with individual reports $n_j$.  The boss, $x$ (an ex officio member), makes a decision $n_i$ after taking into account
the committee and its inputs.
}

\label{def:chanlabel}
\end{defn}

\vskip .25in
\begin{defn} [\bf Multiverse SL -- selection label multiverse]
Let $G=(N^k,\Theta)$ be a downward directed graph where $N$ is the nonnegative integers. 
Define $\bf M_{SL}$ to be the set of universes $\{(G_D, s_D) \mid D\subset N^k, \,\,D\,\, {\rm finite}\}$
where  $s_D$ is the selection labeling function of  the induced subgraph $G_D$. 
We call $\bf M_{SL}$ a k-dimensional multiverse of type~SL and $(G_D, s_D)$ a universe of type~SL.\label{def:multversep4}
\end{defn}

The following theorem (Theorem~\ref{thm:P4}) asserts that the ``large-cube" property is  valid for Multiverse~SL.  

\vskip .25in
\begin{thm}[\bf Multiverse SL] 
Let $\bf M_{SL}$ be a k-dimensional multiverse of type~SL and let $p$ be any positive integer. Then there
is a universe $(G_D, s_D)$ of $\,{\bf M_{SL}}$ and a subset $E\subset N$ with $|E|=p$ and $S=E^k\subset D$ such that 
the set of $s_D$--significant labels $\{s_D(z)\mid z\in S, s_D(z) < \min(z)\}$ has size at most $k^k$. In fact, $s_D$ is
regressively regular over $E^k$.
\label{thm:P4}
\end{thm}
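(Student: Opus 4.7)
The plan is to follow the blueprint of Theorem~\ref{thm:P2}: introduce a relaxation $\hat{s}_D$ of the selection labeling function so that the family $Q = \{\hat{s}_D \mid D \subset N^k \text{ finite}\}$ is full, reflexive, and jump-free (Definition~\ref{def:fullrefjf}); apply Friedman's jump-free theorem (Theorem~\ref{thm:jumpfree}); and then transfer regressive regularity from $\hat{s}_D$ back to $s_D$.

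The relaxation I would use is $\hat{s}_D(x) = \max(x)$ when $\Phi_x^D = \emptyset$, and $\hat{s}_D(x) = s_D(x)$ otherwise---mirroring Friedman's trick in the proof of Theorem~\ref{thm:P2}, where the value at a terminal vertex was inflated from $\min(z)$ to $\max(z)$. A short induction on $\max(x)$, using that every value in $\Phi_x^D$ equals $s_D(y_i)$ for some $y_i \in G_D^x$ with $\max(y_i) < \max(x)$, shows that $\hat{s}_D(x) \leq \max(x)$ with equality exactly when $\Phi_x^D = \emptyset$. This sharpness is what powers everything that follows.

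Fullness and reflexivity of $Q$ are essentially immediate (reflexivity of $s_D$ is built into Definition~\ref{def:chanlabel}). The central step is jump-freeness. Given $x \in A \cap B$, $A_x \subset B_x$, and $\hat{s}_A(y) = \hat{s}_B(y)$ for every $y \in A_x$, the hypothesis actually forces $s_A(y) = s_B(y)$ throughout $A_x$: by the sharpness above, $\hat{s}_A(y) = \max(y) \iff \Phi_y^A = \emptyset$, so the equality of $\hat{s}$-values splits into the case where both $\Phi_y^A$ and $\Phi_y^B$ are empty (so $s_A(y) = s_B(y) = \min(y)$) and the case where both are nonempty (so $s_A(y) = \hat{s}_A(y) = \hat{s}_B(y) = s_B(y)$). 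The downward condition gives $G_A^x \subset A_x \subset B_x$, so every value in $\Phi_x^A$ reappears in $\Phi_x^B$; a brief case analysis on whether $\Phi_x^A$ is empty then yields $\hat{s}_A(x) \geq \hat{s}_B(x)$.

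With $Q$ certified, Theorem~\ref{thm:jumpfree} and its technical note produce a finite $D \subset N^k$ and $E \subset N$ with $|E| = p$ and $E^k \subset D$ such that $\hat{s}_D$ is regressively regular on $E^k$ with at most $k^k$ significant labels. The transfer to $s_D$ rests on the equivalence $\hat{s}_D(z) < \min(z) \iff s_D(z) < \min(z)$: in either direction the hypothesis forces $\Phi_z^D \neq \emptyset$, hence $\hat{s}_D(z) = s_D(z)$. Consequently the $s_D$-significant labels on $E^k$ coincide with the $\hat{s}_D$-significant labels, and both regressive-regularity alternatives carry over verbatim. The main obstacle is the jump-free verification: because $s_D$ is defined recursively on $\max(x)$, a change in one vertex's value can cascade, so it is not a priori clear that agreement of the \emph{relaxed} values on $A_x$ should recover agreement of the \emph{unrelaxed} values there. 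Relaxing precisely at the vertices with $\Phi_y^D = \emptyset$ is what makes the case analysis dichotomous and delivers $s$-agreement from $\hat{s}$-agreement for free.
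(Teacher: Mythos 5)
Your proposal is correct and follows essentially the same route as the paper's own proof: the identical relaxation $\hat{s}_D(x)=\max(x)$ when $\Phi^D_x=\emptyset$, the same induction giving $\hat{s}_D(x)\leq\max(x)$ with equality exactly when $\Phi^D_x=\emptyset$, the same jump-free verification via $G^x_A\subset G^x_B$ and $\Phi^A_x\subset\Phi^B_x$, and the same transfer of regressive regularity back to $s_D$. Your explicit derivation that $\hat{s}$-agreement on $A_x$ forces $s$-agreement there is a step the paper uses implicitly, so your write-up is if anything slightly more careful.
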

\begin{proof}
Define $\hat s_D(x) = s_D(x)$ if $\Phi^D_x \neq \emptyset$.  Otherwise, define $\hat s_D(x) = \max(x)$.  
Induction on $\max(x)$ shows that $\hat s_D(x) \leq \max(x)$ with equality if and only if $\Phi^D_x = \emptyset$. 

Let $Q$ denote the collection of functions $\hat {s}_D$ as $D$ ranges over all finite subsets of $N^k$.  
We will show that $Q$ is full, reflexive, and jump-free (Definition~\ref{def:fullrefjf}).
Full and reflexive are obvious from the definition of $\hat {s}_D$.
We want to show that for all $\hat {s}_A$ and $\hat {s}_B$  in $Q$ the conditions
 $x\in A\cap B$, $A_x \subset B_x$, and $\hat {s}_A(y) = \hat {s}_B(y)$ for all $y\in A_x$ imply that 
$\hat {s}_A(x) \geq \hat {s}_B(x)$.

If $\Phi^A_x = \emptyset$, then $\hat s_A(x) = \max(x)$ and thus $\hat {s}_A(x) \geq \hat {s}_B(x)$.
Suppose  $\Phi^A_x \neq \emptyset$. The conditions $x\in A\cap B$ and $A_x \subset B_x$ imply 
that $G^x_A \subset G^x_B$ and hence, using $s_B(y_i) = s_A(y_i)$, $1\leq i \leq r$, that 
$$\Phi^A_x = \{ F_r^x((y_1,s_A(y_1)), \ldots (y_r,s_A(y_r)))\mid y_1, \ldots,  y_r \in G^x_A\,,r\geq 1\}$$
equals
$$\{ F_r^x((y_1,s_B(y_1)), \ldots (y_r,s_B(y_r)))\mid y_1, \ldots,  y_r \in G^x_A\,,r\geq 1\}$$
which is contained in
$$\Phi^B_x = \{ F_r^x((y_1,s_B(y_1)), \ldots (y_r,s_B(y_r)))\mid y_1, \ldots,  y_r \in G^x_B\,,r\geq 1\}.$$
Thus, we have $\emptyset \neq \Phi^A_x \subset \Phi^B_x$ and hence $s_A(x) = \min(\Phi^A_x)\geq \min(\Phi^B_x) = s_B(x)$.
Since both $\Phi^A_x$ and $\Phi^B_x$ are nonempty, we have 
$\hat s_A(x) = s_A(x) \geq s_B(x) = \hat s_B(x)$. This  shows that  $Q=\{\hat s_D: D\subset N^k, D\,\, {\rm finite}\}$ is jump-free.  
From Theorem~\ref{thm:jumpfree} and the {\em Technical Note}, given any integer $p>0$, 
there is a finite $D\subset N^k$ and a 
subset $E^k\subset D$ with $|E|=p$ such that, for some $\hat{s}_D \in Q$, the set
$\{\hat{s}_D(z)\mid z\in S,\, \hat{s}_D(z)<\min(z)\}$ has at most cardinality $k^k$.  In fact, $\hat s_D$ is regressively 
regular over $E^k$.  Finally, we must show that $s_D$ itself satisfies the conditions just stated for $\hat s_D$.  

To see this latter point, suppose that  $\hat{s}_D(x) \geq \min(x)$ for all $x\in E^k$ of order type $\omega$.
If $\Phi_x^D = \emptyset$ then $s_D(x) = \min(x)$. If $\Phi_x^D \neq \emptyset$ then $s_D(x) = \hat s_D(x) \geq \min(x)$. Thus, $s_D(x) \geq \min(x)$ for all $x\in E^k$ of order type $\omega$.
 
Now suppose that
for all $z,w \in E^k$ of order type $\omega$,  $\hat{s}_D(z) = \hat{s}_D(w) < \min(E)$.
This inequality implies that  $\Phi_z^D \neq \emptyset$ and $\Phi_w^D \neq \emptyset$ 
and thus $s_D(z) = \hat{s}_D(z) = \hat{s}_D(w) = s_D(w) < \min(E)$.
 
Thus, the set $s_D$ is regressively regular over $E^k$.   
And $\{{s}_D(z)\mid z\in S,\, {s}_D(z)<\min(z)\}$ has at most cardinality $k^k$.

\end{proof}

\begin{figure}[h]
\begin{center}
\includegraphics[width=5in]{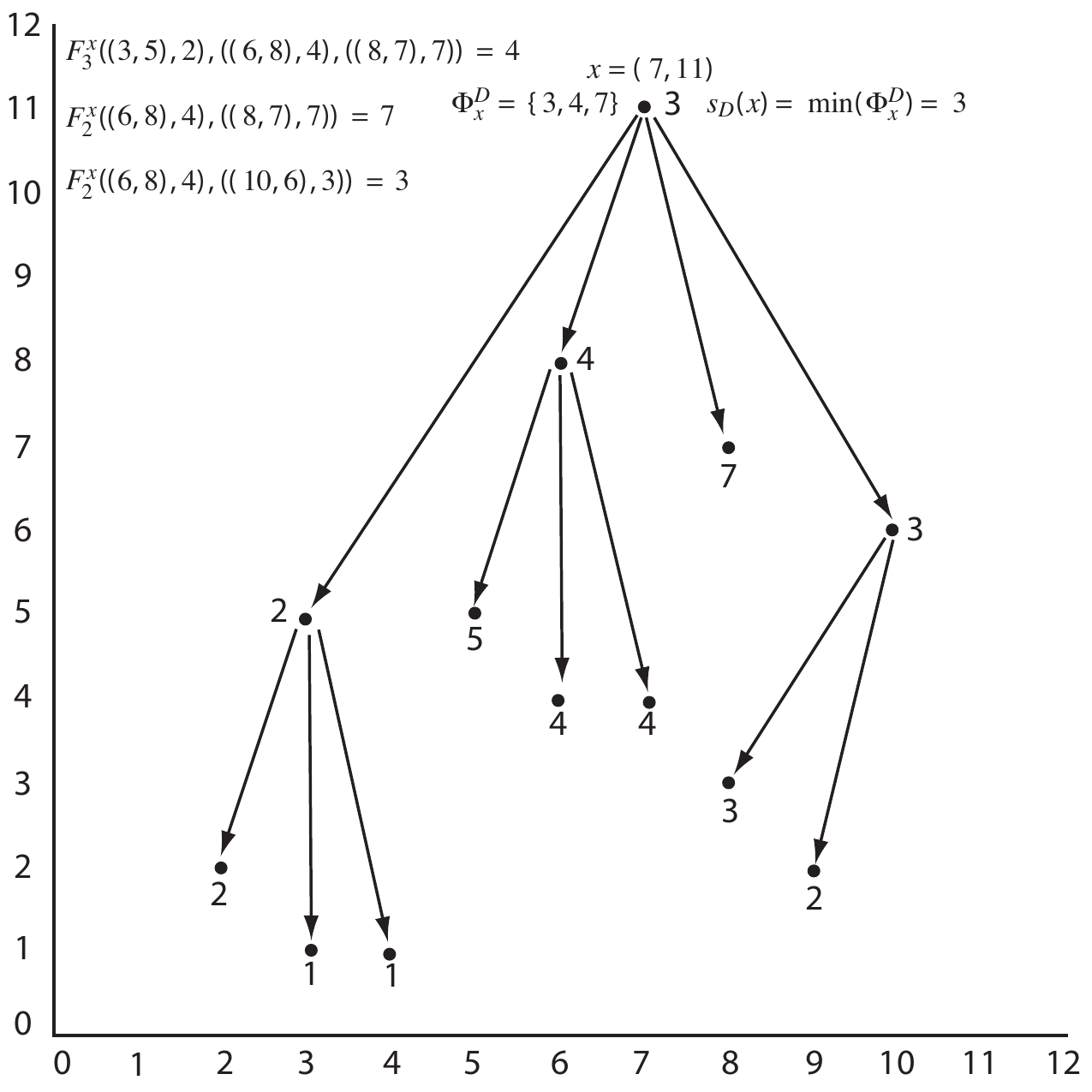}
\caption{Computing $s_D(x)$ in a universe of type SL}\label{fig:univS}
\end{center}
\end{figure}

An example of a universe of type SL is given in Figure~\ref{fig:univS} where we show how $s_D(x)$ was 
computed for $x=(7,11)$ using Definition~\ref{def:chanlabel}.

Note that the directed graph structure, $G=(N^k,\Theta)$, which parameterizes Theorem~\ref{thm:P4} can be thought of as the
``geometry"  of  Multiverse~SL.  This geometry has intuitive value in constructing examples 
but is a nuisance in proving independence. If we take $\Theta = \{(x,y)\mid \max(x) > \max(y)\}$
to be the maximal possible set of edges then the graph structure can be subsumed in the partial selection functions.
We refer to the resulting ``streamlined" multiverse $\bf  M_{SL}$ as Multiverse~HF.  
Theorem~\ref{thm:F} below then follows from Theorem~\ref{thm:P4}.  

\vskip .25in
\begin{thm}[\bf Multiverse HF]
Let $\bf M$ be the k-dimensional multiverse of type~S where $\Theta = \{(x,y)\mid \max(x) > \max(y)\}$.
Let $p$ be any positive integer. Then there
is a universe $(G_D, s_D)$ of $\bf M$ and a subset $E\subset N$ with $|E|=p$ and $S=E^k\subset D$ such that 
the set of significant labels $\{s_D(z)\mid z\in S, s_D(z) < \min(z)\}$ has size at most $k^k$. In fact, $s_D$ is
regressively regular over $E^k$.
\label{thm:F}
\end{thm}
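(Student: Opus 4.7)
The plan is to deduce Theorem~\ref{thm:F} as an immediate specialization of Theorem~\ref{thm:P4}. First I verify that $G=(N^k,\Theta)$ with $\Theta=\{(x,y)\mid \max(x)>\max(y)\}$ is a downward directed lattice graph, which holds by construction of $\Theta$. With this $G$, together with any chosen family $\mathcal{F}_G$ of partial selection functions, the selection labeling function $s_D$ and the multiverse of Definition~\ref{def:multversep4} are well-defined, and by definition Multiverse~HF is exactly this case. Hence the $\mathbf{M}$ in Theorem~\ref{thm:F} is a k-dimensional multiverse of type~SL.

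Now I apply Theorem~\ref{thm:P4} to this $\mathbf{M}$ with the given positive integer $p$. This produces a universe $(G_D, s_D)$ of $\mathbf{M}$ and a subset $E\subset N$ with $|E|=p$ and $S=E^k\subset D$ such that $\{s_D(z)\mid z\in S,\, s_D(z)<\min(z)\}$ has cardinality at most $k^k$, and moreover $s_D$ is regressively regular over $E^k$. This is verbatim the conclusion demanded by Theorem~\ref{thm:F}, so the argument is complete.

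The one conceptual point worth flagging is why the HF form deserves to be called out at all. With the maximal $\Theta$, the adjacency set $G^x$ reduces to the full downward cone $\{y\mid \max(y)<\max(x)\}$, and any thinner graph $G'=(N^k,\Theta')$ can be simulated inside $G$ by restricting the partial selection functions: for $(x,y)\notin \Theta'$ one simply declares every $F^x_r$ undefined on tuples involving $y$. Thus Theorem~\ref{thm:F} is in fact equivalent in strength to Theorem~\ref{thm:P4} rather than strictly weaker, which is the sense in which the graph structure has been ``subsumed in the partial selection functions.''

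There is no genuine obstacle to overcome; the substantive work was already carried out in the proof of Theorem~\ref{thm:P4}. The main things to get right are the verification that the chosen $\Theta$ satisfies the downward condition and the observation that the symbols in the conclusion of Theorem~\ref{thm:P4} match those in the statement of Theorem~\ref{thm:F} once the multiverse is identified as an SL multiverse. Everything else is bookkeeping.
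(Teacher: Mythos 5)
Your proposal is correct and matches the paper's treatment exactly: the paper offers no separate proof of Theorem~\ref{thm:F}, simply observing that the maximal edge set $\Theta=\{(x,y)\mid \max(x)>\max(y)\}$ makes $G$ a downward directed graph, so the statement is an immediate specialization of Theorem~\ref{thm:P4}. Your additional remark that thinner graphs can be simulated by restricting the partial selection functions is precisely the sense in which the paper says the graph structure is ``subsumed in the partial selection functions.''
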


A special case of Theorem~\ref{thm:F} above (where the parameter $r$ is fixed in defining the $s_D$)
is equivalent to Theorem~4.4 of \cite{hf:alc}. 
Theorem 4.4 has been shown by Friedman to be independent of the ZFC axioms of mathematics (see Theorem~4.4 through Theorem~4.15 \cite{hf:alc} and  Lemma~5.3, page ~840, \cite{hf:nlc}). 

{\bf Summary:} We have proved that given an arbitrarily large cube, there is some universe
$(G_D, s_D)$ of $\bf M_{SL}$ for which the ``physics," $s_D$, has a simple structure over a cube of that size.
To prove this large-cube property, we have  used a theorem independent of ZFC.  No proof using just the ZFC axioms is possible. All  of the mathematical techniques of physics lie within the ZFC axiomatic system.
\section{Final Remarks}

For a summary of key ideas involving multiverses, see Linde \cite{al:iuv} and 
Tegmark \cite{mt:mhy}. Tegmark describes four stages of a possible multiverse  theory and discusses the mathematical and physical implications  of  each.  For a well written and thoughtful presentation of the multiverse concept in cosmology, see Sean Carroll \cite{sc:eth}.

Could foundational issues analogous to our assertions about large cubes occur in the study of cosmological multiverses?  The set theoretic techniques we use in this paper are fairly new and not  known to most mathematicians and physicists, but a growing body of useful ZFC--independent theorems like the 
jump-free theorem, Theorem~\ref{thm:jumpfree}, are being added to the set theoretic toolbox.
The existence of structures in a cosmological multiverse corresponding to  our lattice multiverse cubes (and requiring ZFC--independent proofs) could be a subtle artifact of the mathematics, physics, or geometry of the multiverse. 

{\bf Acknowledgments:}  The author thanks  Professors Jeff Remmel and Sam Buss (University of California San Diego, Department of Mathematics) and Professor Rod Canfield (University of Georgia, Department of Computer Science) for their helpful comments and suggestions.


\bibliographystyle{alpha}
\bibliography{multiverse}

\end{document}